\theoremstyle{plain}
\newtheorem{theorem}{Theorem}
\theoremstyle{definition}
\newcommand{\newsection}[1]{\section{#1}} 
\long\def\symbolfootnote[#1]#2{\begingroup%
\def\thefootnote{\fnsymbol{footnote}}\footnotetext[#1]{#2}\endgroup}
\begin{document}

\title{Tighter Worst-Case Bounds on Algebraic Gossip}

\author{Bernhard Haeupler, MIT, haeupler@mit.edu}

\maketitle

\begin{abstract}
Gossip and in particular network coded algebraic gossip have recently attracted attention as a fast, bandwidth-efficient, reliable and distributed way to broadcast or multicast multiple messages. While the algorithms are simple, involved queuing approaches are used to study their performance. The most recent result in this direction shows that uniform algebraic gossip disseminates $k$ messages in $O(\Delta(D + k + \log n))$ rounds where $D$ is the diameter, $n$ the size of the network and $\Delta$ the maximum degree. 

In this paper we give a simpler, short and self-contained proof for this worst-case guarantee. Our approach also allows to reduce the quadratic $\Delta D$ term to $\min\{3n, \Delta D\}$. We furthermore show that a simple round robin routing scheme also achieves $\min\{3n, \Delta D\} + \Delta k$ rounds, eliminating both randomization and coding. Lastly, we combine a recent non-uniform gossip algorithm with a simple routing scheme to get a $O(D + k + \log^{O(1)})$ gossip information dissemination algorithm. This is order optimal as long as $D$ and $k$ are not both polylogarithmically small. 
\end{abstract}

\IEEEpeerreviewmaketitle


\newsection{Introduction}

Broadcast and multicast are fundamental communication primitives with many practical applications like the maintenance of distributed databases or content distribution networks. With modern networks increasing in size they often become less reliable and behave more like distributed systems. This increases the need for reliable ways to disseminate information without knowing the network topology. 

Gossip protocols in which nodes forward information to randomly chosen neighbors have been developed as a powerful new paradigm in this direction. Their randomized approach has been shown to disperse information quickly in many networks while keeping the total number of messages and the congestion on any particular link small. 

When more information than can be fit in a packet is to be distributed selecting the right piece of information to forward can be hard. Algebraic gossip, an adaptation of random linear network coding, has been proposed as a solution and it was shown that in a complete network it outperforms routing \cite{DebM}. Subsequent works like \cite{moskA,borokhovich2010tight,orderopt} extend this idea to general networks. These works also follow the approach of \cite{DebM} to understand the performance of algebraic gossip by analyzing the queuing behavior of innovative packets. The most recent work in this line \cite{borokhovich2010tight,orderopt} use Jackson's theorem to study the worst-case performance of algebraic gossip in terms of network diameter and maximum degree. This will also be the main focus of this paper.

\medskip 
We show that the projection analysis technique of \cite{haeupler2011analyzing} provides easier means to understand algebraic gossip. 
With \Cref{thm:main} we give a simple, short and self-contained proof for the main result in \cite{orderopt}, which itself generalizes the main result in \cite{borokhovich2010tight}. 
In \Cref{thm:maintight} we then further tighten \Cref{thm:main}. Next, we observe that the worst-case performance of algebraic gossip can also be matched by uniform gossip without coding (\Cref{thm:routing2}) or even a simple deterministic round robin routing protocol (\Cref{thm:routing}). Appealing to the optimality of algebraic gossip \cite{ITW_NCoptimality} leads then to yet another alternative proof for Theorems \ref{thm:main} and \ref{thm:maintight}. Lastly, we follow the idea of \cite{orderopt} to bridge the gap to more structured broadcast protocols by giving an efficient non-uniform gossip protocol. Our \Cref{thm:nonuniform} shows that this protocol achieves an optimal performance in any network up to an additive polylogarithmic term. All previous algorithms had gaps of up to $\Theta(n/\log n)$.

\newsection{Gossip Algorithms and the Multicast Problem}

As in \cite{orderopt} we assume a network to be modeled by an undirected graph $G=(V,E)$ with $n=|V|$ nodes, node degree $\Delta_u$ for $u \in V$, maximum degree $\Delta = \max_u \Delta_u$ and diameter $D$. During a {\em gossip protocol} nodes communicate in synchronous rounds in which each node can initiate one bidirectional exchange of a packet with one neighbor. The goal of the {\em information dissemination}, {\em rumor spreading} or $k${\em-message multicast} task is to spread $k$ initially distributed messages to all nodes as fast as possible and with high probability (whp), that is, with probability at least $1-1/n$. We assume that the messages have equal size and that a packet can fit one message plus some header information (which becomes negligibly small for large message/packet sizes). A gossip solution to the multicast problem specifies both, which node contacts which in each round and what is sent in each transmitted packet. The most studied instantiations for these choices are {\em uniform gossip}, that is, each node contacts a uniformly random neighbor in each round, and {\em algebraic gossip}, that is, nodes perform {\em random linear network coding} on the messages to create new packets. In random linear network coding nodes send out random linear combinations of messages (over some finite field $GF(q)$) together with a vector containing all coefficients used. Each new packet is then simply created as a uniformly random linear combination of already received packets or initially known messages; to decode, nodes perform Gaussian elimination.

\newsection{A Simpler and Stronger Proof for Uniform Algebraic Gossip}

The main result in \cite{orderopt} is the following worst-case guarantee on the performance of algebraic gossip in any network:

\begin{theorem}\label{thm:main}
Uniform algebraic gossip over $GF(2)$ disseminates $k$ messages whp in no more than $O(\Delta (D + k + \log n))$ time.
\end{theorem}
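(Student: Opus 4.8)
The plan is to run the projection analysis of \cite{haeupler2011analyzing}, which reduces random linear network coding over $GF(2)$ to a family of elementary $0/1$ infection processes, one per nonzero coefficient vector $\mu \in GF(2)^k$. Call a node $v$ \emph{$\mu$-aware} at time $t$ if the linear span $W_v(t)$ of the coefficient vectors it has received or initially held is not contained in the hyperplane $\mu^\perp$; equivalently, some vector known to $v$ has inner product $1$ with $\mu$. Two elementary facts make this a good reduction: (i) $W_v(t)$ is nondecreasing in $t$, so $\mu$-awareness is monotone; and (ii) $v$ can decode all $k$ messages exactly when $W_v(t)=GF(2)^k$, which over $GF(2)$ is equivalent to $v$ being $\mu$-aware for \emph{every} one of the $2^k-1$ nonzero vectors $\mu$. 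Also, since $\mu\neq 0$ there is an index $i$ with $\mu_i=1$, and the node initially holding message $i$ (whose span contains the unit vector $e_i$) is $\mu$-aware from round $0$, so each infection process starts from a nonempty seed set.

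Next I would isolate the one-step transfer fact. In algebraic gossip a transmitted packet is a uniformly random element of the sender's current subspace, drawn fresh of previous packets; hence if a $\mu$-aware node $u$ sends a packet to a node $w$, the image of that packet under the quotient $GF(2)^k \to GF(2)^k/\mu^\perp \cong GF(2)$ is a uniform bit, so with probability exactly $\tfrac12$ it is nonzero and $w$ becomes $\mu$-aware (and then stays so by monotonicity). Since in uniform gossip $u$ contacts any fixed neighbor in a given round with probability $1/\Delta_u \ge 1/\Delta$, independently across rounds, a $\mu$-aware node turns a fixed non-aware neighbor $\mu$-aware in any given round with probability at least $1/(2\Delta)$.

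Finally I would fix a target node $v$ and a nonzero $\mu$, pick a shortest path $u_0,u_1,\dots,u_\ell$ with $\ell\le D$ from some initially $\mu$-aware node $u_0$ to $v=u_\ell$, and track a ``virtual token'' that sits at $u_j$ once $u_j$ is $\mu$-aware and advances to $u_{j+1}$ on the first subsequent round in which $u_j$ contacts $u_{j+1}$ and the transfer succeeds. Ignoring all other routes is legitimate precisely because $\mu$-awareness persists, so $v$ is $\mu$-aware no later than the token reaching it. By the one-step fact and a routine coupling, the traversal time of the token is stochastically dominated by a sum of $D$ i.i.d.\ $\mathrm{Geometric}(1/(2\Delta))$ variables, a negative-binomial variable of mean $2\Delta D$; a Chernoff bound on the associated $\mathrm{Bin}(T,1/(2\Delta))$ then gives that for a suitable absolute constant $c$ node $v$ is $\mu$-aware within $T=c\,\Delta(D+k+\log n)$ rounds except with probability at most $2^{-(k+2\log_2 n)}$. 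A union bound over the at most $n\cdot 2^k$ pairs $(v,\mu)$ leaves total failure probability at most $1/n$, so whp every node is $\mu$-aware for all nonzero $\mu$ and can decode, proving \Cref{thm:main}.

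I expect the projection reduction and the $\tfrac12$ transfer probability to be essentially immediate, and the real work to be in the last paragraph: making the virtual-token coupling rigorous so that the per-step waiting times are genuinely (dominated by) independent geometrics despite the process depending on its own history, and then reading off from the negative-binomial tail the clean shape $O(\Delta(D+k+\log n))$ — in particular verifying that the $\Delta\log(1/\delta)$ term with $\delta=2^{-\Theta(k+\log n)}$ is exactly what produces the additive $\Delta k$ and $\Delta\log n$ contributions alongside the $\Delta D$ term.
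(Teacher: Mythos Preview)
Your proposal is correct and follows essentially the same approach as the paper: the projection/$\mu$-awareness reduction, the $1/2$ transfer probability, tracking progress along a shortest path with per-round success probability at least $1/(2\Delta)$, a Chernoff bound yielding failure probability $\le 2^{-(k+2\log n)}$ per $(v,\mu)$ pair, and a union bound over $n\cdot 2^k$ pairs. Your exposition is somewhat more explicit about the coupling that makes the per-step waits independent geometrics, but the argument is the same.
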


The proof is somewhat involved and uses Jackson's queuing theorem. Here, we show that one can alternatively obtain the same result using the projection analysis of \cite{haeupler2011analyzing}. In particular, one can essentially get it by setting the min-cut probability $\gamma$ in \cite[Lemma 6]{haeupler2011analyzing} to be $1/\Delta$. Next, we give a simple, short and completely self-contained proof based on \cite[Theorem 3]{haeupler2011analyzing}:

\begin{proof}
We say a node knows a coefficient vector $\mu \in \{0,1\}^k$ if it has received a packet with a coefficient vector that is non-perpendicular to $\mu$ (over $GF(2)$). We claim that for any non-zero vector $\mu$ the probability that any fixed node $v$ does not learn $\mu$ within $O(\Delta(D + k + \log n))$ rounds is at most $2^{-(k + 2 \log n)}$. Then, a union bound over all nodes and all $2^k$ coefficient vectors shows that whp all nodes know all vectors. From this it is easy to conclude that all nodes can decode. 

To prove this claim we look at a shortest path $P$ from $v$ to a node that initially knows $\mu$ (that is, that starts with a message with a non-zero coefficient in $\mu$). At any round $t$ let node $u$ be the closest node to $v$ on $P$ that knows $\mu$. There is a $1/\Delta_u$ chance that $u$ contacts the next node on the path and independently a chance of $1/2$ that $u$ sends out a packet with a coefficient vector that is non-perpendicular to $\mu$. Thus, in any round independently with probability at least $\frac{1}{2\Delta}$ knowledge of $\mu$ makes progress on $P$. A Chernoff bound shows that the probability that less than $D$ progress is made in $16\Delta(D + k + \log n)$ rounds when $8(D + k + \log n)$ successes are expected is at most $2^{-(k + 2 \log n)}$ as claimed.
\end{proof}

\smallskip

Note, that even for constant $k$ \Cref{thm:main} results in a quadratic bound of $\Delta D = \Theta(n^2)$ for many networks. Next, we tightening the bound of \Cref{thm:main} and show that small number of messages (e.g., $k = O(\log n)$) never take more than $O(n \log n)$ time:

\begin{theorem}\label{thm:maintight}
Uniform algebraic gossip disseminates $k$ messages whp in at most $O(\min\{n, \Delta D\} + \Delta (k + \log n))$ time.
\end{theorem}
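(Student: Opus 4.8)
The plan is to reuse the projection/shortest-path argument from the proof of \Cref{thm:main} but to replace the crude ``$1/\Delta$ chance of contacting the next node on the path'' estimate by a sharper accounting that never loses more than a factor depending on $n$ rather than on $\Delta D$. Concretely, I would again fix a non-zero coefficient vector $\mu\in\{0,1\}^k$, fix a target node $v$, and track the closest node $u$ to $v$ on a shortest path $P$ from $v$ to some node initially knowing $\mu$. As before, in each round knowledge of $\mu$ advances along $P$ with probability at least $\tfrac{1}{2\Delta_u}$, where $\Delta_u$ is the degree of the current frontier node $u$. The key observation is that the frontier visits each node of $P$ at most once, so over the whole process the ``expected waiting time'' is at most $\sum_{u\in P} 2\Delta_u$ rounds in expectation, and $\sum_{u\in P}\Delta_u \le \min\{2n, \Delta D\}$ is not always $\Delta D$: since $P$ is a shortest path it is an induced path (no chords), so every node of $V$ is adjacent to at most three nodes of $P$ (itself plus at most two consecutive neighbors, else $P$ would have a shortcut), giving $\sum_{u\in P}\Delta_u \le 3n$. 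Thus the relevant ``total difficulty'' of the path is $O(\min\{n,\Delta D\})$ rather than $O(\Delta D)$.

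The second step is to turn this expectation bound into a whp bound with the right additive $\Delta(k+\log n)$ overhead. I would run the process for $T = c\,(\min\{n,\Delta D\} + \Delta(k+\log n))$ rounds and argue that with probability at least $1 - 2^{-(k+2\log n)}$ the frontier reaches $v$. One clean way: split the $D$ path-edges into those incident to high-degree frontier nodes and low-degree ones, or more simply bound the progress by a sum of independent geometric-type random variables with success probabilities $\tfrac{1}{2\Delta_{u_i}}$ and apply a Chernoff/Bernstein bound for sums of independent (not identically distributed) indicator variables; the total number of trials is dominated by $\sum_i 2\Delta_{u_i} + O(\Delta(k+\log n))$ with the stated failure probability. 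Then a union bound over all $n$ nodes and all $2^k$ vectors $\mu$ gives the theorem, exactly as in \Cref{thm:main}.

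The main obstacle I anticipate is making the concentration step fully rigorous when the per-round success probability changes over time as the frontier moves: the sequence of frontier degrees $\Delta_{u_1},\Delta_{u_2},\dots$ along $P$ is itself random in order but the multiset $\{\Delta_u : u\in P\}$ is fixed, so I would phrase the coupling so that ``time spent at frontier position $i$'' is an independent $\mathrm{Geom}(\ge \tfrac{1}{2\Delta_{u_i}})$ variable and the total time is $\sum_{i=1}^{D}\mathrm{Geom}_i$; a tail bound for such a sum (e.g.\ via the moment generating function $\prod_i \mathbb{E}[e^{s\,\mathrm{Geom}_i}]$, which is controlled because $\sum_i \Delta_{u_i}$ is bounded) then yields the claim. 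The other minor point to check is the combinatorial bound that a shortest path is chordless and hence each vertex has at most three $P$-neighbors, giving $\sum_{u\in P}\Delta_u \le 3n$; this is where the ``$3n$'' in the abstract comes from, though in the theorem statement it is absorbed into $O(\min\{n,\Delta D\})$.
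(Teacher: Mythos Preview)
Your proposal is correct and essentially identical to the paper's proof: the paper also decomposes the time as $X=\sum_i X_i$ with $X_i$ independent $\mathrm{Geom}(\tfrac{1}{2\Delta_{u_i}})$ along the fixed shortest path $P$, uses the same ``at most three $P$-neighbors per vertex'' argument to get $\sum_{u\in P}\Delta_u\le 3n$ and hence $E[X]\le\min\{6n,2\Delta D\}$, and then carries out exactly the MGF/Markov tail bound you suggest, concretely applying Markov to $(1-1/\Delta)^{-2X}$ to obtain $P[X\ge 2(E[X]+t)]<(1-1/\Delta)^{-t}$ with $t=\Delta(k+\log n)$. One minor remark: the obstacle you anticipate does not arise, since $P$ is a fixed shortest path and the frontier traverses it in a deterministic order, so the sequence $(\Delta_{u_1},\ldots,\Delta_{u_D})$ is not random at all.
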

\begin{proof}
In the proof of \Cref{thm:main} let $X_1$, $X_2, \ldots$ be the number of rounds spent for knowledge about $\mu$ to successfully make the $i$th step along $P$ towards $v$ and let $X = \sum_i X_i$ be the total number of rounds needed. Note that each $X_i$ is an independent geometric random variable with success probability $\frac{1}{2\Delta_u}$ and thus $E[X] = \sum_i E[X_i]
 = \sum_{u \in P} 2 \Delta_u \leq 2 D \Delta$. Since $P$ is a shortest path, the sum of node degrees along $P$ is at most $3n$ because every node can be adjacent to at most three consecutive nodes on $P$ without creating a shortcut. We thus also have $E[X] \leq 6n$. By applying Markov's inequality to $(1 - 1/\Delta)^{-2X}$ one can show that $P[X \geq 2(E[X] + t)]< (1 - 1/\Delta)^{-t}$. Using this with $t = \Delta (k + \log n)$ replaces the Chernoff bound argument in the second paragraph of the proof of \Cref{thm:main}.
\end{proof}

\newsection{Gossip without Coding or Randomization}

Among the main motivations for using randomized gossip is that its randomization adds robustness and often avoids bottlenecks in the network. In regular, well-connected networks this allows for much faster dissemination times than the bound given in \Cref{thm:maintight}. Adding coding on top of this (in particular, network coded algebraic gossip) allows mixing and efficient diffusion of information which often significantly increases throughput~\cite{haeupler2011analyzing}. 

In this section, we show that if one just wants to achieve the worst-case running time of \Cref{thm:maintight} neither randomization nor coding is needed. Instead the following simple deterministic {\em round robin routing} achieves the same performance (with no probability of failure): Each node $u$ repeatedly contacts its neighbors such that every neighbor is contacted at least every (order) $\Delta_u$ rounds. A node furthermore can forward any packet as long as the same packet is not forwarded twice to the same node. One simple protocol that does this efficiently is {\em prioritized round robin routing}. In this protocol each node fixes a cyclic order of its neighbors and simply contacts the next node in its list in every round. Each node $u$ furthermore keeps a counter how often it has sent out each packet and sends out the lexicographically first packet it has not sent out $\Delta_u$ often already (if such a packet exists).

\begin{theorem}\label{thm:routing}
(Prioritized) round robin routing disseminates all $k$ messages in at most $\min\{3n, \Delta D\} + \Delta k$ time. 
\end{theorem}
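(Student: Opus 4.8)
The plan is to replay the shortest-path/frontier argument from the proof of \Cref{thm:maintight}, but to replace the ``progress with probability $\tfrac{1}{2\Delta}$ per round'' step by a deterministic accounting of how much a message is delayed by lexicographically smaller messages competing for the same links. Fix a target node $v$ and a message $m$, say $m$ is the $j$-th message in lexicographic order, and let $P=(w_0,w_1,\dots,w_\ell)$ be a shortest path from a node initially holding $m$ to $w_\ell=v$, so $\ell\le D$. As in \Cref{thm:maintight}, because $P$ is shortest every vertex of $G$ is adjacent to at most three consecutive vertices of $P$, so $\sum_{i=0}^{\ell}\Delta_{w_i}\le 3n$; trivially also $\sum_{i=0}^{\ell-1}\Delta_{w_i}\le\ell\Delta\le D\Delta$, hence $\sum_{i<\ell}\Delta_{w_i}\le\min\{3n,\Delta D\}$. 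I would define the frontier time $\tau_i$ as the first round by the end of which \emph{all} of $w_0,\dots,w_i$ know $m$; then $\tau_0=0$ and $\tau_0\le\tau_1\le\cdots$, and it suffices to show $\tau_\ell\le\min\{3n,\Delta D\}+\Delta k$, since a union bound over all $v$ and all $m$ then finishes the proof (there is no failure probability, the protocol being deterministic).

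The key local step is: whenever the frontier sits at $w_i$ with $i<\ell$, the node $w_i$ already knows $m$ and, under (prioritized) round robin, contacts $w_{i+1}$ once every $\Delta_{w_i}$ rounds; at such a contact the priority rule forces $w_i$ to send $m$ to $w_{i+1}$ \emph{unless} it instead sends a lexicographically earlier message $m'\prec m$, and it never sends the same message to $w_{i+1}$ twice. Hence, writing $b_i$ for the number of distinct messages $m'\prec m$ that $w_i$ forwards to $w_{i+1}$ during the interval $(\tau_i,\tau_{i+1}]$, the frontier advances past $w_i$ within $b_i+1$ contacts, i.e. $\tau_{i+1}-\tau_i\le(b_i+1)\Delta_{w_i}$. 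Summing over $i$ yields
\[
\tau_\ell\;\le\;\sum_{i<\ell}\Delta_{w_i}\;+\;\Delta\sum_{i<\ell}b_i\;\le\;\min\{3n,\Delta D\}\;+\;\Delta\sum_{i<\ell}b_i,
\]
so everything reduces to establishing the bound $\sum_{i<\ell}b_i\le k$.

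This last bound is the main obstacle, and it is exactly the classical store-and-forward \emph{pipelining} phenomenon --- $k$ packets traversing a length-$\ell$ path under a lowest-index-first discipline arrive in $\ell+k$ rather than $\ell k$ steps --- transported to our setting with the per-round degree slowdown and with the messages $m'\prec m$ entering $P$ at arbitrary vertices and times. The intuition I would make precise is that a message $m'$ can be charged to $b_i$ only while it is itself being pushed \emph{forward} across the edge $(w_i,w_{i+1})$, just ahead of $m$'s frontier; since $m'$ crosses each directed edge of $P$ at most once and the intervals $(\tau_0,\tau_1],(\tau_1,\tau_2],\dots$ are consecutive and disjoint, the blocking forwardings contributed by a single $m'$ occur in a ``staircase'' synchronized with $m$'s frontier, and a delay-sequence argument (trace $m$'s frontier backwards along $P$, switching to the chain of blocking messages and charging each message only once) bounds the total over all $m'\prec m$ by $k-1$. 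Equivalently one may phrase this as a potential-function argument controlling $\sum_{i<\ell}b_i$, or invoke a known pipelining bound for leveled/path routing. I expect the delicate point to be ruling out that one slow $m'$ blocks $m$ at a growing number of consecutive edges; once $\sum_{i<\ell}b_i\le k$ is in hand the theorem follows from the displayed inequality.

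Finally I would remark that the analysis applies verbatim to the general ``contact every neighbour every order $\Delta_u$ rounds, never forward a packet twice to the same node'' rule, only changing the constant, and that it needs neither randomness nor coding, so the running time is a worst-case (not whp) guarantee.
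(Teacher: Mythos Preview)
Your frontier argument is the natural thing to try, but the step you flag as ``the main obstacle'' is in fact a genuine gap: the bound $\sum_{i<\ell} b_i \le k$ is false in general. A single message $m'\prec m$ \emph{can} block $m$ at arbitrarily many consecutive edges when the degrees along $P$ are heterogeneous. Concretely, take a path $w_0,w_1,w_2,\dots$ with $\Delta_{w_0}=1$ and $\Delta_{w_i}$ strictly increasing (say $\Delta_{w_1}=100$, $\Delta_{w_2}=200$, \dots), and place only the two messages $m'\prec m$ at $w_0$. One can choose each node's cyclic order so that $w_i$ first contacts $w_{i+1}$ only after both $m'$ and $m$ have arrived at $w_i$; then at that contact $w_i$ sends $m'$, and $m$ must wait a further $\Delta_{w_i}$ rounds. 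This makes $b_i=1$ for every $i$, so $\sum_i b_i = \ell$ while $k=2$. Your staircase/charging sketch implicitly assumes a blocking $m'$ has already been forwarded past $w_{i+1}$ by time $\tau_{i+1}$, which fails precisely when $\Delta_{w_{i+1}}$ is large enough that $m$ catches up before $m'$ moves on.

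The paper's proof sidesteps this entirely by changing the invariant. Rather than tracking the frontier of $m$ and counting blocking events along $P$, it fixes the destination $v$ and proves by induction on $t$ that at time $\mathrm{dist}(u,v)+\Delta i$ node $v$ either knows $m$ or knows at least $i{+}1$ \emph{distinct} messages lexicographically smaller than $m$, where $\mathrm{dist}$ is the degree-weighted path length. The point is that this counts distinct smaller messages \emph{received at $v$}, not blocking events along the path, so a single $m'$ can contribute at most once; the bound $i\le k-1$ is then immediate. If you want to rescue your approach, the right quantity to control is not $\sum_i b_i$ but the invariant the paper uses (or, equivalently, $\sum_i b_i\,\Delta_{w_i}$, which does stay $O(\Delta k)$ --- but proving that essentially reproduces the paper's induction).
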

\begin{proof}
Throughout the proof we use the following notion of distance between any two nodes $u$ and $v$: Let $dist(u,v)$ be the smallest number such that there is a path $P$ between $u$ and $v$ with $\sum_{w \in P-\{v\}} \Delta_w \leq dist(u,v)$. Note that $dist(u,v) \leq \min\{3n, \Delta D\}$ as shown in the proof of \Cref{thm:maintight}. We claim that for any nodes $u$,$v$, any message $m$ and any $i \geq 0$ if node $u$ knows message $m$ initially then at time $t = dist(u,v) + \Delta i$ node $v$ will either know message $m$ or at least $i+1$ messages smaller than $m$. Note that this claim also implies the theorem since at time $\min\{3n, \Delta D\} + \Delta k$ any node has either received every message $m$ or $k$ (smaller) messages.

We prove the claim by induction on $t$. Nothing needs to be shown for $t=0$. For the induction step consider any $u,v,m$ and $t = dist(u,v) + \Delta i$ and let $v'$ be the neighbor of $v$ on a shortest distance path from $u$ to $v$. By induction hypothesis we get that for any $j \leq i$ node $v'$ at time $d(u,v') + \Delta j$ either knows $m$ or $j+1$ messages smaller than $m$. From this it is clear that at time $d(u,v') + \Delta_v' = d(u,v)$ node $v'$ has sent either message $m$ or one message smaller than $m$ to $v$ and similarly for any $j<i$ at time $d(u,v') + \Delta j + \Delta_v = d(u,v) + \Delta j$ node $v$ has sent either $m$ or $j+1$ messages to $v$ that are smaller than $m$. For $j=i$ this is exactly the claim that needed to be proved. 
\end{proof}

\smallskip

We remark that a similar performance guarantee also holds for randomized uniform gossip. It can be obtained along the same line as the proof for \Cref{thm:routing}. We state the result next but omit the proof. 

\begin{theorem}\label{thm:routing2}
Prioritized uniform gossip whp routes all messages to all nodes in $O(\min\{n, \Delta D\} + \Delta (k + \log n))$ time.
\end{theorem}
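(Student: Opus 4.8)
The plan is to fuse the two arguments that precede the statement: the inductive ``accumulation along a shortest\nobreakdash-distance path'' of the proof of \Cref{thm:routing} and the moment\nobreakdash-generating\nobreakdash-function concentration of the proof of \Cref{thm:maintight}. Fix a target node $v$ and a message $m$, let $u$ be a node that initially knows $m$, and let $P=(w_0=u,w_1,\dots,w_\ell=v)$ be a path minimizing $\sum_{j<\ell}\Delta_{w_j}=dist(u,v)$, so that $dist(u,v)\le\min\{3n,\Delta D\}$ exactly as argued in \Cref{thm:maintight}. In the deterministic protocol each node $w$ contacts any fixed neighbor at least once in every window of $\Delta_w$ rounds; under prioritized uniform gossip the number of rounds until $w$ next initiates contact with a fixed neighbor is instead a geometric random variable with success probability $1/\Delta_w$, hence of mean $\Delta_w$. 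First I would therefore re-run the induction of \Cref{thm:routing}, replacing the deterministic ``$+\Delta_{w_j}$ to push the next packet one step along $P$'' by an independent $\mathrm{Geom}(1/\Delta_{w_j})$ waiting time, and the deterministic ``$+\Delta$ per additional accumulated message'' (the stalling mechanism at the tail of $P$) by geometric waiting times of parameter at least $1/\Delta$.

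Carried out carefully, this shows that the round in which $v$ first knows $m$ (or knows all $k$ messages) is stochastically dominated by $X=\sum_{j<\ell}G_j+\sum_{r\le k}G'_r$, a sum of independent geometric random variables each of parameter at least $1/\Delta$, with $E[X]=O(\min\{n,\Delta D\}+\Delta k)$. Then I would apply Markov's inequality to $(1-1/\Delta)^{-cX}$ verbatim as in the proof of \Cref{thm:maintight} to get a tail bound of the form $P[X\ge 2(E[X]+t)]<(1-1/\Delta)^{-t}$, take $t=\Theta(\Delta\log n)$ so that the right-hand side is at most $n^{-3}$, and finish with a union bound over the $n$ choices of $v$ and the $k$ choices of $m$.

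The hard part will not be the concentration step, which is essentially unchanged, but making the stochastic domination rigorous. The events ``$w_j$ initiates contact with $w_{j+1}$'' for distinct $j$ use disjoint randomness and are genuinely independent, but the prioritized packet\nobreakdash-selection rule couples the \emph{identities} of the packets that travel along $P$: a node prioritizes globally rather than per neighbour, and may finish forwarding a small message before a later node on $P$ needs it. To handle this I would phrase the induction in terms of a sequence of stopping times, so that conditioned on the history up to the moment $w_j$ first holds a packet with index $\le m$, the number of further rounds until $w_j$ delivers such a packet to $w_{j+1}$ is dominated by an independent $\mathrm{Geom}(1/\Delta_{w_j})$ — and similarly for the at most $k$ accumulation steps near $v$ — so that the resulting delivery time is dominated by a sum of independent geometrics to which the argument of \Cref{thm:maintight} applies. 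This is precisely the step meant by the remark that the proof ``can be obtained along the same line'' as that of \Cref{thm:routing}.
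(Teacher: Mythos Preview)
The paper omits the proof entirely and only remarks that it ``can be obtained along the same line as the proof for \Cref{thm:routing}''; your proposal does exactly this, replacing the deterministic $\Delta_w$-round windows by independent $\mathrm{Geom}(1/\Delta_w)$ waits and then invoking the moment-generating-function tail bound from \Cref{thm:maintight}. This is the intended approach, and you have correctly isolated the only nontrivial point (decoupling the stopping times so that the sum of geometrics genuinely dominates the delivery time despite the global prioritization rule).
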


Given that these two simple routing schemes achieve a good worst-case performance one would expect that adding algebraic gossip on top of it should not harm the performance. Indeed we can make this formal and appeal to the optimality of algebraic gossip~\cite{ITW_NCoptimality} to give yet another simple and alternative proof for \Cref{thm:maintight}:

\smallskip 

\begin{proof}[Proof of \Cref{thm:maintight}]
In contrast to before we require the field size used for coding to be at least $q = n^2$ which gives rise to $2\log n$-size coefficients. In \cite{ITW_NCoptimality} it is shown that with probability $1 - n/q$ algebraic gossip completes in exactly optimal time, that is, at the first time it is possible in hindsight to route all messages to each node individually. Furthermore, \Cref{thm:routing2} shows how to route the messages whp in $O(\min\{3n, \Delta D\} + \Delta (k + \log n))$ rounds via the uniform gossip exchanges. Thus with probability $1 - n/q - 1/n = 1 - 2/n$ uniform algebraic gossip also completes in this time. 
\end{proof}

\newsection{Faster Non-Uniform Gossip}

Lastly, \cite{orderopt} shows that fast non-uniform gossip protocols can be obtained by first running a $1$-message broadcast gossip algorithm and then performing algebraic gossip along an induced spanning-tree. More specifically, for any broadcast algorithm with running time $B$ (and given a leader) this leads to a $O(B + k + \log n)$ solution to the $k$-message multicast problem considered here. Instantiating this with the broadcast protocol from~\cite{weakconductance} it is shown that an $\Theta(D + k + \Phi^{-1}\log n)$ algebraic gossip protocol is possible for graphs with weak conductance $\Phi$. Unfortunately, there are graphs for which $\Phi^{-1}\log n = \Theta(n)$ even so the network diameter is logarithmic (for example, a balanced binary tree). For these graphs a gap of up to $\Theta(n/\log n)$ between the upper bound and the $\Omega(D + k)$ lower bound remains. The following lemma improves upon this using~\cite{gossip}, a recent strengthening of~\cite{weakconductance}:

\begin{theorem}\label{thm:nonuniform}
There is a gossip algorithm that whp routes $k$ messages to all nodes in $O(D + k + \log^{O(1)})$ time in any network. 
\end{theorem}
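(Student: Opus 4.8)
The plan is to reuse the reduction of \cite{orderopt} essentially verbatim and to replace the broadcast subroutine it is instantiated with by the stronger one of \cite{gossip}. Recall from \cite{orderopt} that, given a designated leader and a single-message gossip broadcast protocol delivering a rumor from any source to all nodes within $B$ rounds whp, one can solve the $k$-message multicast problem in $O(B + k + \log n)$ rounds: the broadcast primitive is used to grow a spanning tree of depth $O(B)$ rooted at the leader, and the $k$ messages are then pipelined through this tree by algebraic gossip, delivering everything in $O(\text{depth} + k)$ further rounds. Hence it suffices to supply (i) a broadcast protocol with $B = O(D + \log^{O(1)} n)$ and (ii) a leader.

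For (i) I would invoke \cite{gossip}, the announced strengthening of \cite{weakconductance}, which removes the $\Phi^{-1}$ factor and yields a gossip broadcast protocol completing in $O(D + \log^{O(1)} n)$ rounds whp in every graph. For (ii), a leader can be elected inside the same budget using the same machinery: let every node draw an independent $\Theta(\log n)$-bit identifier (these are pairwise distinct whp) and use the global-aggregation capability of \cite{gossip} to single out the node holding the maximum identifier, again in $O(D + \log^{O(1)} n)$ rounds whp. Plugging $B = O(D + \log^{O(1)} n)$ into the reduction and absorbing the additive $\log n$ term gives a total running time of $O(D + k + \log^{O(1)} n)$; since only a constant number of high-probability guarantees are invoked, a union bound (tightening constants if one insists on failure probability exactly $1/n$) preserves the overall whp guarantee.

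The main obstacle is not calculation but interfacing: one must check that what \cite{gossip} exports is precisely what the \cite{orderopt} reduction consumes — a single-source broadcast reliable enough that every node agrees on a consistent spanning tree of depth $O(D + \log^{O(1)} n)$, and a unique elected leader whp. Both follow from the $\log^{O(1)} n$-round broadcast of \cite{gossip} together with standard boosting and union bounds, but the one point worth stating carefully is that the pipelined algebraic-gossip phase of \cite{orderopt} depends on the tree only through an additive depth term, so that replacing the $\Phi^{-1}\log n$ slack of \cite{weakconductance} by the $\log^{O(1)} n$ slack of \cite{gossip} propagates directly into the final bound without interacting with $k$.
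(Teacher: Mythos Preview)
Your proposal is correct and takes essentially the same approach as the paper: plug the $O(D+\log^{O(1)} n)$ broadcast of \cite{gossip} into a build-a-tree-then-pipeline framework. The paper's execution is slightly more self-contained than your black-box use of the \cite{orderopt} reduction --- it runs the \cite{gossip} broadcast with node IDs as rumors, has each node set its parent to the neighbor from which it first learned the global minimum ID (so leader election and tree construction happen in a single pass, with the depth bound coming from the fact that a rumor moves at most one hop per round), and then pipelines the $k$ messages through the tree by plain store-and-forward routing rather than algebraic gossip --- but the structure and the resulting bound match what you propose.
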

\begin{proof}
In~\cite{gossip} a broadcast gossip algorithm with time $O(D + \log^{O(1)})$ is given. We first run this algorithm with each node using its ID as a message. After completion every node declares the node it first learned the smallest ID from as its parent (breaking ties arbitrarily). This induces a spanning tree. In all future rounds, each node initiates a bidirectional exchange with its parent forwarding any message not sent by it before (if such a message exists). A standard pipelining proof (similar to \Cref{thm:routing}) shows that after $O(k + D')$ time all nodes know about all messages, where $D'$ is the diameter of the tree. Since a message can travel at most one step per round during the initial broadcast we have $D' = O(D + \log^{O(1)})$ which completes the proof.
\end{proof}

\newsection{Conclusion}

We have given simple, short and tighter proofs for the worst-case performance of algebraic gossip using the projection analysis technique of \cite{haeupler2011analyzing}. This improves over the recent results in \cite{orderopt}. We could furthermore show that this worst-case performance is also met by a simple round robin routing scheme that eliminates both randomness and coding. A similar routing scheme has been studied intensively under the term quasirandom rumor spreading \cite{doerr2008quasirandom} and has been shown to perform surprisingly well in many topologies. Lastly, bridging the gap to more structured broadcast protocols we have shown a non-uniform gossip protocol that achieves order optimal (up to an additive polylogarithmic term) distributed information spreading. All these results do not take any unreliability or changes in the network topology into account. This is unfortunate since fault-tolerance with respect to dynamic or unreliable topologies is one of the main motivations behind pursuing randomized gossip approaches. Some work in this direction can be found in \cite{haeupler2011analyzing} but any further approaches to formalize and study dynamic topologies and the reliability of the algorithms presented here is an important open problem of interest.

\end{document}